\newtheorem{proposition}{Proposition}}
\begin{document}
\title{Secrecy Outage Analysis over Correlated\\ Composite Nakagami-$m$/Gamma Fading Channels}
\author{George C. Alexandropoulos,~\IEEEmembership{Senior~Member,~IEEE} and Kostas~P.~Peppas,~\IEEEmembership{Senior~Member,~IEEE}
\thanks{G. C. Alexandropoulos is with the Mathematical and Algorithmic Sciences Lab, Paris Research Center, Huawei Technologies France SASU, 92100 Boulogne-Billancourt, France. The views expressed here are his own and do not represent Huawei's ones. (e-mail: george.alexandropoulos@huawei.com).}
\thanks{K. P. Peppas is with the Department of Telecommunication Science and Technology, University of Peloponnese, Tripoli 22100, Greece (e-mail: peppas@uop.gr).}
}

\maketitle

\begin{abstract}
The secrecy outage performance of wireless communication systems operating over spatially correlated composite fading channels is analyzed in this paper. We adopt a multiplicative composite channel model for both the legitimate communication link and the link between the eavesdropper and the legitimate transmitter, consisting of Nakagami-$m$ distributed small-scale fading and shadowing (large-scale fading) modeled by the Gamma distribution. We consider the realistic case where small-scale fading between the links is independent, but shadowing is arbitrarily correlated, and present novel analytical expressions for the probability that the secrecy capacity falls below a target secrecy rate. The presented numerically evaluated results, verified by equivalent computer simulations, offer useful insights on the impact of shadowing correlation and composite fading parameters on the system's secrecy outage performance.
\end{abstract}
\begin{IEEEkeywords}
Fading correlation, Gamma distribution, Nakagami-$m$ fading, physical layer security, secrecy capacity.
\end{IEEEkeywords}

\section{Introduction} \label{sec:Intro}
\noindent
Physical Layer Security (PLS) has been recently considered as a companion technology to conventional cryptography offering the potential to significantly enhance the quality of secure communication in fifth generation (5G) wireless networks \cite{Yang_ComMag_2015_all}. In the pioneering work of Wyner in information theoretic security \cite{Wyner_Bell_1975}, it was shown that secure communication is feasible when the channel quality of legitimate parties is better than that of the eavesdropper. However, in practice, there are certain cases where the latter channels may experience correlated conditions, which will intuitively render the performance of PLS schemes limited. Spatial fading correlation highly depends on antenna deployments, proximity of the legitimate receiver and eavesdropper, as well as scatters around them \cite{Shiu_TCOM_2000_all}.

Assuming that the legitimate transmitter knows the channel gains towards the legitimate receiver and eavesdropper in \cite{Jeon_TIT_2011_all}, the loss of the secrecy capacity due to spatial correlation was quantified. Infinite series expressions for both the average secrecy capacity and outage probability were obtained in \cite{Sun_spl_2012_all} for correlated Rayleigh fading channels. By considering that the legitimate communication link and the link between the eavesdropper and the legitimate transmitter are arbitrarily correlated and both modeled by the log-normal distribution, \cite{Liu_coml_2013} studied the Probability of the Non Zero Secrecy Capacity (PNZSC). The PLS of Multiple-Input Multiple-Output (MIMO) wiretap channels with orthogonal space-time block codes was investigated in \cite{Ferdinand_2013_all}. In that work, the fading channels between the legitimate link and the link between the eavesdropper and the legitimate transmitter were assumed to be independent and modeled as Ricean and Rayleigh distributed, respectively. However, within each communication link the multiple fading channels, resulting from the utilization of multiple antennas, were assumed to be arbitrarily correlated. Recently in \cite{Pan_TVT_2016_all}, the average secrecy capacity and Secrecy Outage Probability (SOP) were studied for the cases where legitimate and eavesdropper links experience independent log-normal fading, correlated log-normal fading, or independent composite fading conditions. In the context of underlying cognitive radio networks, the SOP performance was also lately investigated in \cite{J:ZhangJ_all} considering correlated Rayleigh fading.

Motivated by the latest advances in the secrecy capacity analysis \cite{J:Cuma_George_TVT_2017_all} and aiming at studying PLS performance under more realistic fading conditions, we adopt in this paper a correlated composite fading channel model for the legitimate and eavesdropping links. Our model comprises of independent small-scale fading and arbitrarily correlated shadowing. For the small-scale fading we consider the versatile Nakagami-$m$ fading model \cite{J:Wang_MIMO_Nak_2014_all}, while shadowing (large-scale fading) is modeled by the Gamma distribution. We first present a novel analytical expression for the numerical SOP evaluation. Then, for the important special case of non zero secrecy capacity, a novel infinite series representation for PNZSC is deduced. Finally, in order to obtain further insights on the key factors affecting PLS performance, a simple closed form expression for PNZSC that becomes asymptotically tight for high values of the Signal-Noise-Ratio (SNR) is presented. All derived analytical results are substantiated with equivalent ones obtained by means of computer simulations.

\emph{Notations}: $\mathbb{E}\{\cdot\}$ denotes expectation, $\Gamma(\cdot)$ is the Gamma function \cite[eq$.$ (8.310/1)]{B:Gra_Ryz_Book}, $(x)_i \triangleq \Gamma(x+i)/\Gamma(x)$ is the Pochhammer's symbol \cite[p$.$ xliii]{B:Gra_Ryz_Book}, $u(\cdot)$ is the unit step function \cite[p$.$ xliv]{B:Gra_Ryz_Book}, $K_a (\cdot)$ is the modified Bessel function of the second kind and order $a$ \cite[eq$.$ (8.407/1)]{B:Gra_Ryz_Book}, $U(\cdot,\cdot,\cdot)$ is the Kummer hypergeometric function \cite[eq$.$ (9.210/2)]{B:Gra_Ryz_Book}, and $G\substack{m,n\\p,q}[\cdot]$ is the Meijer's G-function \cite[eq$.$ (9.301)]{B:Gra_Ryz_Book}.

\section{System and Channel Models} \label{sec:Models}
We consider a legitimate wireless communication link where a legitimate transmitter sends a message to the legitimate receiver $B$, while the eavesdropper $E$ attempts to decode this message from its received signal through the wireless link between itself and the legitimate transmitter. The channel links are assumed to be arbitrarily correlated due to either close proximity of $B$ and $E$ or similarity of the scatters around them. In addition, we assume that both channels experience ergodic block fading, where channel coefficients remain constants during a block period and vary independently from one block to the next one. We also consider, similar to \cite{Jeon_TIT_2011_all, Sun_spl_2012_all, Liu_wcoml_2013, Liu_coml_2013, J:ZhangJ_all, J:Wang_MIMO_Nak_2014_all, Ferdinand_2013_all, Pan_TVT_2016_all, J:HongjiangLei, J:CLiu1, J:HongjiangLei2, J:CLiu2}, that the channel coefficients from the legitimate transmitter to $B$ and to $E$ are ideally estimated in $B$ and $E$, respectively. In cases of active eavesdropping, $E$ is capable of estimating its corresponding channel as $B$ does, whereas in other cases, it needs to eavesdrop characteristics of the channel estimation process (e$.$g$.$, the legitimate transmitter's pilots signals).

Assuming narrowband communication links, the baseband received complex-valued signals at $B$ and $E$, respectively, can be mathematically expressed as
\begin{subequations}\label{eq:Received_signals}
\begin{equation}\label{eq:Received_signal_B}
y_{B} = \sqrt{p}h_{B}s + n_{B},
\end{equation}
\begin{equation}\label{eq:Received_signal_E}
y_{E} = \sqrt{p}h_{E}s + n_{E},
\end{equation}
\end{subequations}
where $p$ denotes the fixed average power of the legitimate transmitter and $s$ is its unit power complex-valued information message chosen from a discrete modulation set. In \eqref{eq:Received_signals}, $h_{B}$ and $h_{E}$ represent the complex channel gains from the legitimate transmitter to $B$ and to $E$, respectively. Also, $n_{B}$ and $n_{E}$ denote the zero mean Additive White Gaussian Noises (AWGNs) at $B$ and $E$, respectively, with variances $\sigma_B^2$ and $\sigma_E^2$.

Both wireless channels are assumed to be subject to composite propagation conditions incorporating multipath fading and shadowing. The former is modeled by the versatile Nakagami-$m$ distribution, while the latter by the Gamma distribution. In mathematical representation, we model the amplitudes of the channel gains as $g_{B}\triangleq|h_{B}|=\sqrt{b_1}w_1$ and $g_{E}\triangleq |h_{E}|= \sqrt{b_2}w_2$, where $b_1$ and $b_2$ are Gamma random variables (RVs) with shaping parameters $k_1$ and $k_2$ and scaling parameters $\theta_1$ and $\theta_2$, respectively. In addition, $w_1$ and $w_2$ are assumed to be Nakagami-$m$ RVs with shaping parameters $m_1$ and $m_2$ and average powers $\Omega_1\triangleq\mathbb{E}\{w_1^2\}$ and $\Omega_2\triangleq\mathbb{E}\{w_2^2\}$, respectively. Due to either close proximity of $B$ and $E$ and/or similarity of the scatters around them, we consider the realistic case where $g _{B}$ and $g_{E}$ are correlated RVs resulting from correlated shadowing, but small-scale fading is assumed to be independent between $B$ and $E$. As such, $w_1$ and $w_2$ are assumed to be independent Nakagami-$m$ RVs, whereas $b_1$ and $b_2$ are modeled as correlated Gamma RVs.

Capitalizing on the system model of \eqref{eq:Received_signal_B}, the instantaneous received SNR at $B$ is given by $\gamma_{1}\triangleq pg_{B}^2/\sigma_B^2$ with average value derived as $\overline{\gamma}_{1}\triangleq p\mathbb{E}\{g_B^2\}/\sigma_B^2$, where $\mathbb{E}\{g_B^2\}=k_1\theta_1\Omega_1$. Similarly from \eqref{eq:Received_signal_E}, the instantaneous received SNR at $E$ and its average value are given by $\gamma_{2}\triangleq pg_{E}^2/\sigma_E^2$ and $\overline{\gamma}_{2}\triangleq pk_2\theta_2\Omega_2/\sigma_B^2$, respectively. The joint Probability Density Function (PDF) of $\gamma_{1}$ and $\gamma_{2}$ for the considered arbitarily correlated composite Nakagami-$m$/Gamma fading channel model can be obtained by employing \cite[eq$.$ (3)]{Bithas_BiK_2009} for the special case of independent Nakagami-$m$ RVs and after using a standard transformation of RVs, yielding
\begin{align} \label{eq:Bivariate_SNRs}
\nonumber f_{\gamma_{1},\gamma_{2}}(x_1,x_2)  =&  \frac{4(1-\rho)^{k_2}}{\Gamma(m_1) \Gamma(m_2)}\sum_{i,j = 0}^{\infty}\frac{(k_1)_i(k_2-k_1)_j\rho^{i+j}}{i!j!(i+k_2)_j}
\\& \times\prod_{\ell = 1}^2\frac{A_\ell^{\xi_\ell}x_\ell^{ \xi_\ell-1}}{\Gamma(i+k_\ell)}K_{\psi_\ell}\left(2\sqrt{A_\ell x_\ell}\right),
\end{align}
where $\xi_1\triangleq(m_1+k_1+i)/2$, $\xi_2\triangleq(m_2+k_2+i+j)/2$, $\psi_1\triangleq m_1-k_1-i$, $\psi_2\triangleq m_2-k_2-i-j$, and $A_\ell\triangleq\frac{m_\ell k_\ell}{(1-\rho)\overline{\gamma}_\ell}$. In the latter PDF expression, $\rho\in\left[0,1\right)$ represents the correlation coefficient between the RVs $b_1$ and $b_2$ \cite[Sec$.$ II]{Bithas_BiK_2009}.

\section{Secrecy Performance Analysis} \label{sec:OP}
In this section, we present novel analytical expressions for the SOP and PNZSC performance of the considered PLS communication system operating over arbitrarily correlated composite Nakagami-$m$/Gamma fading channels.
\subsection{Secrecy Outage Probability (SOP)}
The SOP performance of the PLS system described in Section~\ref{sec:Models} is given by the following probability \cite[eq$.$ (7)]{Liu_wcoml_2013}
\begin{align}\label{eq:SOC}
P_{\rm o}(r) &\triangleq 1 - {\rm Pr}\left[\gamma_1>2^r\left(1+\gamma_2\right)-1\right] \nonumber\\
& =1- \int_0^\infty\int_{h(x_2,r)}^\infty f_{\gamma_{1},\gamma_{2}}(x_1,x_2){\rm d}x_1{\rm d}x_2,
\end{align}
where $r$ denotes the target secrecy rate in bps/Hz and $h(x_2,r)\triangleq (1+x_2)2^r-1$. Based on the latter integral expression, we establish in the following proposition a method for the efficient numerical SOP evaluation.
\begin{proposition}\label{Prop:Cmnumerical}\rm{
The SOP of the considered PLS system can be tightly approximated numerically using the expression given by \eqref{Eq:Secrecyoutage} (top of next page),
\begin{figure*}[!t]
\begin{equation}\label{Eq:Secrecyoutage}
\begin{split}
P_{\rm o}(r)  \cong& 1-4\sqrt{\pi}\frac{(1-\rho)^{k_2}}{\Gamma(m_1) \Gamma(m_2)}\sum_{i,j = 0}^{\infty}\frac{(k_1)_i(k_2-k_1)_j2^{-2i-2k_1+1}\rho^{i+j}}{i!j!(i+k_2)_j\Gamma(i+k_1)\Gamma(i+k_2)}\sum_{k=1}^{15}w_k t_k^{4m_1-1} \\
& \times G\substack{3,0\\1,3} \left[
\frac{2^{r-2}k_2m_2 t_k^4\overline{\gamma}_1}{k_1m_1\overline{\gamma}_2}+\frac{(2^r-1)k_2m_2}{(1-\rho)\overline{\gamma}_2}
  \Big| \substack{ 1 \\ 0, i+j+k_2, m_2}    \right]U\left(m_1-i-k_1+1/2,2m_1-2i-2k_1+1,2t_k^2\right)
\end{split}
\end{equation}
\hrulefill 
\end{figure*}
where $w_k$ and $t_k$ for $k=1,2,\ldots,15$ are the weights and abscissas given in \cite[Tabs$.$ II and III]{J:Steen_all}.}
\end{proposition}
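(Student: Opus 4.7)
The plan is to reduce (6) to a single univariate integral over $t\in[0,\infty)$ with weight $e^{-t^2}$, so that the Steen--Byrne--Gelbard Gauss--Hermite quadrature can be applied directly. First I would substitute (4) into the defining double integral (6), interchange the $(i,j)$-series with the $(x_1,x_2)$-integration (justified by uniform convergence on compact sets for $\rho\in[0,1)$), and factor the integrand into $x_1$- and $x_2$-dependent pieces coupled only through the lower limit $h(x_2,r)$ of the inner $x_1$-integral. In that inner integral I would then use the substitution $x_1=t^4/(4A_1)$, which sends $2\sqrt{A_1 x_1}\mapsto t^2$, together with the classical identity $K_\nu(z)=\sqrt{\pi}(2z)^\nu e^{-z}U(\nu+1/2,2\nu+1,2z)$ to rewrite $K_{\psi_1}(t^2)$ as $e^{-t^2}$ times a Kummer-$U$ factor with exactly the arguments shown in (7); tracking $dx_1=(t^3/A_1)dt$, $x_1^{\xi_1-1}$ and the weight $(2t^2)^{\psi_1}$ makes the net power of $t$ collapse to $4(\xi_1-1)+3+2\psi_1=4m_1-1$, precisely matching the $t_k^{4m_1-1}$ of (7).

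I would next invoke Fubini to swap the resulting $t$- and $x_2$-integrations, so that the outer variable is $t\ge t_{\min}:=(4A_1(2^r-1))^{1/4}$ and the inner $x_2$ runs over a finite interval $[0,x_2^\ast(t)]$, with $x_2^\ast(t)$ the inverse of $t_0(x_2)=(4A_1 h(x_2,r))^{1/4}$. The inner $x_2$-integral is then evaluated in closed form via the Meijer-$G$ representation $K_\nu(2\sqrt{z})=\tfrac12 z^{-\nu/2}G^{2,0}_{0,2}(z\,|\,-;\,\nu,0)$ combined with the standard incomplete-integral identity for Meijer $G$ that raises $(m,q)$ by $(1,1)$ (Prudnikov, Vol.~3, Sec.~2.24); after a parameter shift to absorb the fractional $\pm\psi_2/2$ entries and a reduction exploiting $a_1=1$ together with a $b_j=0$, this produces exactly the $G^{3,0}_{1,3}[\,\cdot\,|\,1;\,0,\,i+j+k_2,\,m_2]$ appearing in (7). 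Substituting $x_2^\ast(t)$ back in and using $A_\ell=m_\ell k_\ell/((1-\rho)\overline{\gamma}_\ell)$ gives the displayed Meijer-$G$ argument.

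Because the Meijer-$G$ factor vanishes for $t<t_{\min}$ (the corresponding $x_2$-range is empty there), the outer $t$-integral can be extended from $t_{\min}$ down to $0$ without incurring any error, and the resulting $\int_0^\infty e^{-t^2}g(t)\,dt$ is then replaced by $\sum_{k=1}^{15} w_k g(t_k)$ using the 15-node Gauss--Hermite-type rule of Steen, Byrne, and Gelbard, producing the ``$\cong$'' in (7). The step I expect to be the main obstacle is the closed-form inner $x_2$-integration: one must select the \emph{right} Meijer-$G$ representation of $K_{\psi_2}$ and the \emph{right} orientation of the incomplete-integral identity so that the answer collapses to $G^{3,0}_{1,3}$ rather than the dual $G^{2,1}_{1,3}$ form (these correspond to different contour orientations and agree only after a careful parameter shift), and then to bookkeep the powers of $2$, $\sqrt{\pi}$, and $A_\ell$ so that the overall coefficient reduces to $4\sqrt{\pi}(1-\rho)^{k_2}2^{-2i-2k_1+1}/(\Gamma(m_1)\Gamma(m_2)\Gamma(i+k_1)\Gamma(i+k_2))$ as in (7).
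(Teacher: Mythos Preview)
Your proposal uses the same two key ingredients as the paper's proof—closed-form evaluation of one iterated integral via Meijer-$G$ machinery, and reduction of the other to an $e^{-t^2}$-weighted half-line integral through the identity $K_\nu(x)=\sqrt{\pi}\,e^{-x}(2x)^{\nu}U(\nu+\tfrac12,2\nu+1,2x)$ followed by the Steen--Byrne--Gelbard quadrature—so the overall strategy is correct, and your bookkeeping of the power $4(\xi_1-1)+3+2\psi_1=4m_1-1$ is right.

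The paper, however, orders the steps more directly. It evaluates the \emph{tail} integral $\int_{h(\cdot,r)}^{\infty}$ in closed form first, by writing both $u(x-1)=G^{0,1}_{1,1}[x\mid 1;0]$ and $K_\nu(2\sqrt{x})=\tfrac12\,G^{2,0}_{0,2}[x\mid -;\,\tfrac{\nu}{2},-\tfrac{\nu}{2}]$ as Meijer $G$-functions and applying the Mellin-convolution identity \cite[eq.~(2.24.1/1)]{B:Prudnikov3_all}. Because the indicator is of tail type $u(x/a-1)$, the convolution raises $m$ by one and yields $G^{3,0}_{1,3}$ immediately—there is no ``right orientation'' to pick and no $G^{2,1}_{1,3}$-to-$G^{3,0}_{1,3}$ conversion to perform. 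The remaining single integral then already runs over the full half-line $[0,\infty)$, so the Kummer-$U$ identity together with the substitution $y^{2}=2\sqrt{A_\bullet x_\bullet}$ puts it straight into the quadrature form; no Fubini swap and no range extension are required.

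By contrast, your ordering (substitute first, then Fubini, then close the now-\emph{finite} $x_2$-integral) creates two avoidable obstacles. First, an incomplete integral $\int_0^{b}$ of a $G^{2,0}_{0,2}$ naturally produces a $G^{2,1}_{1,3}$, so you must argue a nontrivial parameter reduction to reach the $G^{3,0}_{1,3}$ form. Second—and this is the genuinely weak step—your extension of the outer integral from $[t_{\min},\infty)$ to $[0,\infty)$ relies on the claim that ``the Meijer-$G$ factor vanishes for $t<t_{\min}$.'' For such $t$ the upper limit $x_2^\ast(t)$ is negative, so the underlying \emph{integral} is zero, but the Meijer-$G$ closed form you substitute for it, evaluated at the resulting (possibly negative) argument, need not be zero; if any node $t_k$ falls below $t_{\min}$ you would pick up a spurious contribution. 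Doing the closed-form step first on the tail integral, as the paper does, sidesteps both issues.
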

\begin{proof}
Substituting the joint PDF of $\gamma_{1}$ and $\gamma_{2}$ given by \eqref{eq:Bivariate_SNRs} into \eqref{eq:SOC}, the following two-fold integral is deduced
\begin{equation}\label{Eq:I1}
\mathcal{I} \!=\! \int\limits_{0}^\infty\!\int\limits_{0}^\infty u\!\left(\frac{x_1}{h(x_2,r)}-1\right)
\!\prod_{\ell=1}^2 x_\ell^{ \xi_\ell-1}\!K_{\psi_\ell}\!\!\left(2\sqrt{A_\ell x_\ell}\right)\!{\rm d}x_\ell.
\end{equation}
\relpenalty=10000
\binoppenalty=10000
The inner integral, i$.$e$.$, the one with respect to $x_1$, can be computed in closed form by expressing the Bessel and unit step functions in terms of Meijer's G-functions, i$.$e$.$, as $K_\nu(2\sqrt{x}) = 0.5\sqrt{\pi}G\substack{2,0\\0,2} \left[ x  \left| \substack{ - \\ -\nu/2, \nu/2} \right.\right]$ \cite[eq$.$ (8.4.23/1)]{B:Prudnikov3_all} and $u\left(x-1\right) = G\substack{0,1\\1,1} \left[ x \left| \substack{ 1 \\ 0} \right.\right]$ \cite[eq. (8.4.2/1)]{B:Prudnikov3_all}, respectively. Then, by employing the integral expression \cite[eq$.$ (2.24.1/1)]{B:Prudnikov3_all}, \eqref{Eq:I1} can be simplified to the following single integral
\begin{equation}\label{Eq:I2}
\begin{split}
\mathcal{I} =& \frac{A_1^{-\xi_1}}{2}\int_0^\infty G\substack{3,0\\1,3} \left[ A_1 h(x_2,r)  \left| \substack{ 1 \\ 0, -\psi_1/2+\xi_1, \psi_1/2+\xi_1} \right.\right]\\
& \times x_2^{ \xi_2-1}K_{\psi_2}\left(2\sqrt{A_2 x_2}\right)
{\rm d}x_2.
\end{split}
\end{equation}
\begin{figure*}[!t]
\begin{equation}\label{Eq:Nzoutage}
\begin{split}
P_{\rm o}(0) =  \frac{(1-\rho)^{k_2}}{\Gamma(m_1) \Gamma(m_2)}\sum_{i,j = 0}^{\infty}\frac{(k_1)_i(k_2-k_1)_j\rho^{i+j}}{i!j!(i+k_2)_j\Gamma(i+k_1)\Gamma(i+k_2)}  
G\substack{2,3\\3,3} \left[ \frac{m_1 k_1 \overline{\gamma}_2}{m_2 k_2 \overline{\gamma}_1}  \left| \substack{ 1, 1-m_2, 1-i-j-k_2 \\ i+k_1,  m_1, 0}     \right.\right]
\end{split}
\end{equation}
\hrulefill 
\end{figure*}
It is noted that all necessary conditions for the existence of \cite[eq$.$ (2.24.1/1)]{B:Prudnikov3_all} are satisfied throughout this paper's analysis. The integral in \eqref{Eq:I2} cannot be in general solved in closed form when $r>0$ holds. However, by employing the identity $K_\nu(x) = \sqrt{\pi}e^{-x}(2x)^{\nu}U\left(0.5+\nu,1+2\nu, 2x \right)$ \cite[eq$.$ (9.238/3)]{B:Gra_Ryz_Book} as well as the change of variables $2\sqrt{A_2x_2} = y^2$, the resulting integral can be efficiently evaluated numerically by using the modified Gauss-Chebyshev quadrature technique described in \cite{J:Steen_all}. Following this technique, SOP can be numerically evaluated as in \eqref{Eq:Secrecyoutage}, thus, completing the proof.
\end{proof}

\subsection{Probability of Non Zero Secrecy Capacity (PNZSC)}
PNZSC defined using \eqref{eq:SOC} as $P_{\rm o}(0)$ often serves as a fundamental benchmark on the secrecy performance of PLS systems \cite{Liu_wcoml_2013}. Although it can be numerically approximated for the considered PLS system from \eqref{Eq:Secrecyoutage} after setting $r=0$, we next present a novel analytical PNZSC infinite series representation.
\begin{proposition}\label{Prop:PNZSC}\rm{
An infinite series expression for PNZSC for the considered PLS system is given by \eqref{Eq:Nzoutage} (top of this page).}
\end{proposition}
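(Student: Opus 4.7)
Setting $r=0$ in \eqref{eq:SOC} reduces the problem to $P_{\rm o}(0)=\Pr[\gamma_1\le\gamma_2]$, i.e., integrating the joint PDF \eqref{eq:Bivariate_SNRs} over the region $\{0\le x_1\le x_2\}$. After interchanging the order of summation and integration, I would evaluate the inner integral $\int_0^{x_2}x_1^{\xi_1-1}K_{\psi_1}(2\sqrt{A_1 x_1})\,{\rm d}x_1$ in closed form by representing $K_{\psi_1}$ as a $G\substack{2,0\\0,2}$ Meijer function via \cite[eq$.$ (8.4.23/1)]{B:Prudnikov3_all} and then applying the closed-form Meijer-$G$ indefinite integration formula. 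Invoking the shift identity $z^k G(z\mid a;b)=G(z\mid a+k;b+k)$ together with the simplifications $\xi_1-\psi_1/2=k_1+i$ and $\xi_1+\psi_1/2=m_1$ would collapse the result into a $G\substack{2,1\\1,3}$ function of $A_1 x_2$ with integer-valued parameter list $\{1\}$ on top and $\{k_1+i,m_1,0\}$ on the bottom---essentially the CDF of a generalized-K random variable.

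Substituting back, the outer integral in $x_2$ becomes the Mellin convolution of this $G\substack{2,1\\1,3}$ with a $G\substack{2,0\\0,2}$ from rewriting $K_{\psi_2}$ as a Meijer $G$. Applying Prudnikov's formula \cite[eq$.$ (2.24.1/1)]{B:Prudnikov3_all}, already deployed in the proof of Proposition~\ref{Prop:Cmnumerical}, produces exactly a $G\substack{2,3\\3,3}$ function of argument $A_2/A_1$, matching the order in \eqref{Eq:Nzoutage}; the analogous simplifications $\xi_2-\psi_2/2=k_2+i+j$ and $\xi_2+\psi_2/2=m_2$ render its upper and lower parameter lists integer-valued.

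Finally, I would collect prefactors: the two $\tfrac12$ factors from the Bessel-to-Meijer conversions cancel against the $4$ appearing in \eqref{eq:Bivariate_SNRs}; the $A_1^{\xi_1}A_2^{\xi_2}$ from the PDF, combined with the $A_1^{-\xi_1}$ from the inner step and the $A_1^{-\xi_2}$ prefactor produced by Prudnikov's convolution formula, leaves a residual $(A_2/A_1)^{\xi_2}$ that is absorbed into the Meijer $G$ by the shift identity together with the reflection $G\substack{m,n\\p,q}[z\mid a;b]=G\substack{n,m\\q,p}[1/z\mid 1-b;1-a]$, which simultaneously inverts the argument to $A_1/A_2=m_1 k_1\overline{\gamma}_2/(m_2 k_2\overline{\gamma}_1)$ and yields the parameter lists appearing in \eqref{Eq:Nzoutage}. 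The main obstacle is precisely this parameter bookkeeping: Prudnikov's formula initially produces entries involving the non-integer $\xi_\ell$, and only careful deployment of the identities relating $(\xi_\ell,\psi_\ell)$ to the integers $(k_\ell,m_\ell,i,j)$, together with the shift and reflection laws, will make the half-integer shifts cancel to reproduce the compact form in \eqref{Eq:Nzoutage}.
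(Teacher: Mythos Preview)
Your proposal is correct and uses the same Meijer-$G$/Prudnikov machinery as the paper. The only difference is the order of integration: the paper performs the inner integral over $x_2\in[x_1,\infty)$ first (via the unit-step trick of Proposition~\ref{Prop:Cmnumerical}, obtaining a $G\substack{3,0\\1,3}[A_2 x_1]$) and then applies \cite[eq.~(2.24.1/1)]{B:Prudnikov3_all} in $x_1$, whereas you reverse the roles of $x_1$ and $x_2$; both routes collapse to \eqref{Eq:Nzoutage} after the same shift/reflection bookkeeping you describe.
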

\begin{proof}
Starting from \eqref{eq:SOC}, the PNZSC $P_{\rm o}(0)$ is obtained as
\begin{equation}\label{Eq:Nzoutage2}
P_{\rm o}(r) = \int_0^\infty\int_{x_1}^\infty f_{\gamma_{1},\gamma_{2}}(x_1,x_2){\rm d}x_2{\rm d}x_1.
\end{equation}
Substituting the joint PDF expression \eqref{eq:Bivariate_SNRs} into \eqref{Eq:Nzoutage2}, the following two-fold integral appears in the PNZSC expression
\begin{equation}\label{Eq:J1}
\begin{split}
\mathcal{J} &= \int_0^\infty\int_{x_1}^\infty \prod_{\ell = 1}^2x_\ell^{ \xi_\ell-1}K_{\psi_\ell}\left(2\sqrt{A_\ell x_\ell}\right)
{\rm d}x_\ell.
\end{split}
\end{equation}
The latter inner integral with respect to $x_2$ can be solved using \cite[eqs$.$ (8.4.23/1), (8.4.2/1), and (2.24.1/1)]{B:Prudnikov3_all} yielding
\begin{equation}\label{Eq:J2}
\begin{split}
\mathcal{J} = &
\frac{A_2^{-\xi_2}}{2}\int_0^\infty G\substack{3,0\\1,3} \left[ A_2 x_1  \left| \substack{ 1 \\ 0, -\psi_2/2+\xi_2, \psi_2/2+\xi_2} \right.\right]\\
& \times x_1^{ \xi_1-1}K_{\psi_1}\left(2\sqrt{A_1 x_1}\right){\rm d}x_1.
\end{split}
\end{equation}
Finally, \eqref{Eq:Nzoutage} is deduced after using \cite[eq$.$ (2.24.1/1)]{B:Prudnikov3_all}.
\end{proof}

\subsection{Asymptotic Analysis for PNZSC}
To gain further insights on the impact of the composite fading parameters as well as of shadowing correlation on the considered PLS system's performance, we next present a closed form asymptotic expression for PNZSC that is valid for high values of the average received SNRs.
\begin{proposition}\label{Prop:asym}\rm{
For high values of $\overline{\gamma}_1$, PNZSC can be obtained from the following expression with $\alpha_1 \triangleq \min\{k_1, m_1\}$:

\begin{align}\label{eq:PNZSC_approx}
\nonumber P_{\rm o}(0) \overset{\overline{\gamma}_1 \rightarrow \infty}{\cong}&
\frac{(1-\rho)^{k_2}\Gamma(|k_1-m_1|)\Gamma(k_2+\alpha_1)}{\alpha_1\prod_{\ell=1}^2\Gamma(m_\ell)\Gamma(k_\ell)}
\\& \times \Gamma(m_2+\alpha_1)\left(\frac{m_1k_1\overline{\gamma}_2}{m_2k_2\overline{\gamma}_1}\right)^{\alpha_1}.
\end{align}}
\end{proposition}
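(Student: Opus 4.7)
The plan is to extract the leading-order behaviour of $P_{\rm o}(0)$ as $\overline{\gamma}_1\to\infty$ from the exact infinite-series representation \eqref{Eq:Nzoutage}. Writing $z\triangleq m_1k_1\overline{\gamma}_2/(m_2k_2\overline{\gamma}_1)$, the argument of every Meijer G-function in the double sum vanishes in this limit, so the problem reduces to a small-argument expansion of $G_{3,3}^{2,3}[z\,\vert\,\ldots]$ together with a resummation in the indices $i,j$.

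First I would use the Mellin-Barnes contour representation of the Meijer G-function and close the Bromwich contour to the right. After cancelling the spurious pole at $s=0$ produced by $b_3=0$ via $\Gamma(s)/\Gamma(1+s)=1/s$, the rightmost poles are the simple ones of $\Gamma(i+k_1-s)\Gamma(m_1-s)$, located at $s=i+k_1+\nu$ and $s=m_1+\nu$ for $\nu\ge 0$. The dominant small-$z$ contribution of the $(i,j)$-summand therefore scales as $z^{\min(i+k_1,m_1)}$, so the exponent controlling $P_{\rm o}(0)$ for large $\overline{\gamma}_1$ is indeed $\alpha_1=\min\{k_1,m_1\}$, matching the exponent in the proposition.

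Next, I would split into the two regimes $k_1<m_1$ and $k_1>m_1$. In the first the smallest pole $s=k_1$ is attained only for $i=0$ while every other $i$-term yields a strictly higher power of $z$, so only the $j$-series survives at leading order. In the second the pole $s=m_1$ is active for every $i\ge 0$, so both sums must be retained. In either case the residue at the active pole yields a product of gamma functions that I would rewrite through the Pochhammer identity $\Gamma(a+n)=\Gamma(a)(a)_n$ to recognise the surviving series in $\rho$ as a Gauss hypergeometric ${}_2F_1$; when $k_1>m_1$ a second hypergeometric summation identity is also needed to collapse the outer $i$-sum.

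The closing step is to combine the ${}_2F_1$ factor with the $(1-\rho)^{k_2}$ prefactor inherited from the bivariate PDF by means of Euler's transformation ${}_2F_1(a,b;c;\rho)=(1-\rho)^{c-a-b}{}_2F_1(c-a,c-b;c;\rho)$; after this cancellation the two cases merge into the single expression \eqref{eq:PNZSC_approx}, with $\Gamma(|k_1-m_1|)$ and $\alpha_1$ absorbing the case split. I expect the main technical obstacle to lie precisely in this hypergeometric reduction when $k_1>m_1$: after the residue extraction the Pochhammer structure of the double sum is not immediately that of a tabulated closed-form series, and folding the inner ${}_2F_1$ back into the outer $i$-sum before the $(1-\rho)^{k_2}$ prefactor cancels cleanly requires a careful chaining of an Euler (or Pfaff) transformation with the Gauss summation.
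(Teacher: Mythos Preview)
Your route is genuinely different from the paper's. Rather than expanding the exact series \eqref{Eq:Nzoutage}, the paper returns to the bivariate shadowing model and approximates the joint MGF of $(b_1,b_2)$ in the limit $\theta_1\to\infty$; the inverse Laplace transform of this approximation yields a \emph{factorized} joint SNR density \eqref{eq:Bivariate_SNRs_2}, so no $(i,j)$ double sum ever appears. One then applies the small-argument Bessel asymptotic $K_\nu(x)\sim\tfrac12\Gamma(|\nu|)(2/x)^{|\nu|}$ to the $\ell=1$ factor and evaluates a single Mellin-type integral via \cite[eq.~(2.24.2/1)]{B:Prudnikov3_all} to obtain \eqref{eq:PNZSC_approx} directly. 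The virtue of this detour through the MGF is precisely that it bypasses the resummation you are bracing for.

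Your residue analysis correctly isolates the diversity order $\alpha_1=\min\{k_1,m_1\}$, but the final hypergeometric collapse does not behave as you predict, even in the easier regime $k_1<m_1$. There only $i=0$ contributes, and the surviving $j$-sum is ${}_2F_1(k_2-k_1,\,k_2+k_1;\,k_2;\,\rho)$; Euler's transformation absorbs the $(1-\rho)^{k_2}$ prefactor but leaves ${}_2F_1(k_1,-k_1;k_2;\rho)$, which equals $1$ only at $\rho=0$. Hence an exact leading-order extraction from \eqref{Eq:Nzoutage} carries a residual $\rho$-dependent factor that \eqref{eq:PNZSC_approx} does not have. The paper's MGF-level approximation is an \emph{additional} simplification on top of the $z\to 0$ limit, and it is this step, not any tabulated hypergeometric identity, that produces the closed form; along your route there is no Euler/Pfaff/Gauss chain that removes the extra ${}_2F_1$, so you would not land on \eqref{eq:PNZSC_approx} as stated.
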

\begin{proof}
When $\overline{\gamma}_1\rightarrow\infty$, holds $\theta_1\rightarrow\infty$. For this asymptotic case, the joint Moment Generating Function (MGF) of $b_1$ and $b_2$ obtained using \cite[eq$.$ (7)]{J:BivariateEL_all} can be approximated as
\begin{align}\label{eq:MGFjoint2}
\mathcal{M}_{b_1, b_2}(s_1, s_2) \overset{\overline{\gamma}_1 \rightarrow \infty}{\cong}& \frac{(1-\rho)^{-k_1}}{\prod_{\ell=1}^2\theta_\ell^{k_\ell}\left(s_\ell+\frac{1}{\theta_\ell(1-\rho)}\right)^{k_\ell}}.
\end{align}
By taking the inverse Laplace transform of the latter MGF, the joint PDF of $b_1$ and $b_2$ can be asymptotically approximated as
\begin{align}\label{eq:Bivariate_Gamma_apprpx}
f_{b_1,b_2}(y_1,y_2) \overset{\overline{\gamma}_1 \rightarrow \infty}{\cong}&  \frac{(1-\rho)^{-k_1}y_1^{k_1-1}y_2^{k_2-1}}{\Gamma(k_1)\Gamma(k_2)\theta_1^{k_1}\theta_2^{k_2}}
e^{-\sum_{\ell=1}^2\frac{y_\ell}{\theta_\ell(1-\rho)}}.
\end{align}
The joint PDF of RVs $g _{B}$ and $g_{E}$ can be derived as follows
\begin{align}\label{eq:Bivariate_Gains}
\nonumber f_{g_{B},g_{E}}(x_1,x_2) = \int_0^\infty\int_0^\infty &   f_{g_{B}|b_1}(x_1|y_1)f_{g_{E}|b_2}(x_2|y_2)
\\&\times f_{b_1,b_2}(y_1,y_2) {\rm d}y_1{\rm d}y_2,
\end{align}
where $f_{g_{B}|b_1}(\cdot)$ denotes the PDF of $g_{B}$ conditioned on $b_1$ and $f_{g_{E}|b_2}(\cdot)$ denotes the PDF of $g_{E}$ conditioned on $b_2$. Based on the channel model in Section~\ref{sec:Models}, the latter PDFs are the marginal Nakagami-$m$ PDFs with average powers $\mathbb{E}\{g_B^2|b_1\}=b_1\Omega_1$ and $\mathbb{E}\{g_E^2|b_2\}=b_2\Omega_2$, respectively. Using the transformations of RVs $\gamma_1 = g_B^2\overline{\gamma}_1/k_1\theta_1\Omega_1$ and $\gamma_2 = g_E^2\overline{\gamma}_2/k_2\theta_2\Omega_2$ in the joint PDF definition \eqref{eq:Bivariate_Gains} yields after some algebraic manipulations the following asymptotically approximate bivariate PDF expression
\begin{align}\label{eq:Bivariate_SNRs_2}
\nonumber& f_{\gamma_{1},\gamma_{2}}(x_1,x_2)  \overset{\overline{\gamma}_1 \rightarrow \infty}{\cong} 4(1-\rho)^{k_2} \\
& \times \prod_{\ell = 1}^2\left[ A_\ell^{\tilde{\xi}_\ell}
\frac{x_\ell^{ \tilde{\xi}_\ell-1}}{\Gamma(k_\ell)\Gamma(m_\ell)}
K_{\tilde{\psi}_\ell}\left(2\sqrt{A_\ell x_\ell}\right)\right],
\end{align}
where $\tilde{\xi}_\ell\triangleq(m_\ell+k_\ell)/2$ and $\tilde{\psi}_\ell\triangleq m_\ell-k_\ell$. The proof completes by using the identity $K_{\alpha_1}(x) \overset{x \rightarrow 0}{\cong} (2/x)^{|\alpha_1|}\Gamma(|\alpha_1|)/2$, a similar line of arguments as in the proof of Proposition~\ref{Prop:PNZSC}, and \cite[eq$.$ (2.24.2/1)]{B:Prudnikov3_all} for evaluating the finally resulting single integral with respect to $x_1$.
\end{proof}
\enlargethispage{\baselineskip}

\vspace{-3mm}
\section{Numerical Results and Discussion} \label{sec:Results}
In this section, we numerically evaluate the analytical expressions \eqref{Eq:Secrecyoutage}, \eqref{Eq:Nzoutage}, and \eqref{eq:PNZSC_approx} for the secrecy outage performance of the considered PLS communication system that operates over arbitrarily correlated composite Nakagami-$m$/Gamma fading channels. In the two figures that follow we also include equivalent results obtained by means of computer simulations in order to verify the correctness of the presented mathematical formulas. For the numerical evaluation of the double infinite series appearing in \eqref{Eq:Secrecyoutage} and \eqref{Eq:Nzoutage}, we have truncated both series in each expression to the same finite number of terms $\mathcal{N}$ leading to a perfect match with equivalent computer simulations up to the third significant digit. In general, $\mathcal{N}$ increases with increasing values of any of the parameters $\rho$, $m_1$, $m_2$, $k_1$, and $k_2$, and decreases as the average SNR increases. To further decrease the computational complexity of \eqref{Eq:Secrecyoutage} and \eqref{Eq:Nzoutage}, the included Kummer hypergeometric function and the Meijer G-function have been first precomputed and then stored, and finally used in the evaluation of the respective truncated series.

\begin{figure}[!t]
\centering
\includegraphics[keepaspectratio,width=2.8in]{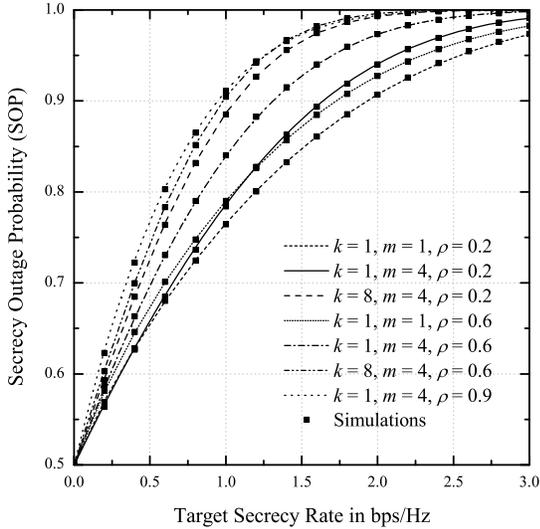}
\caption{SOP vs $r$ in bps/Hz for various values of the correlation coefficient $\rho$, the small-scale shaping parameter $m$, and the shadowing parameter $k$.}\label{Fig:poutvrate}\vspace{-3mm}
\end{figure}
Figure~\ref{Fig:poutvrate} illustrates SOP versus $r$ in bps/Hz for the common average SNR values $\overline{\gamma}_1=\overline{\gamma}_2=4$dB, various values of the correlation coefficient $\rho$, of the common shadowing shaping parameter $k\triangleq k_1=k_2$, and of the common small-scale shaping parameter $m\triangleq m_1=m_2$. PNZSC as a function of $\overline{\gamma}_1$ in dB is depicted in Fig$.$~\ref{Fig:poutvssnr} for $\overline{\gamma}_2= 0$dB, $m=4$, as well as different values of $\rho$ and the shadowing parameters $k_1$ and $k_2$. For the SOP results in Fig$.$~\ref{Fig:poutvrate} we have used from $\mathcal{N}=10$ (for $k=1$, $m=1$, and $\rho=0.2$) to $\mathcal{N}=45$ (for $k=1$, $m=4$, and $\rho=0.9$) terms to truncate both infinite series included in \eqref{Eq:Secrecyoutage}. The corresponding range of terms in Fig$.$~\ref{Fig:poutvssnr} for the PNZSC curves obtained using \eqref{Eq:Nzoutage} is from $\mathcal{N}=10$ to $\mathcal{N}=30$. As shown in both figures and as expected, SOP degrades with increasing $r$ and PNZSC improves with increasing $\overline{\gamma}_1$. In addition, increasing $\rho$ and/or the shadowing parameters degrades SOP for the plotted range of $r$ in Fig$.$~\ref{Fig:poutvrate}, and improves PNZSC as $\overline{\gamma}_1$ increases as depicted in Fig$.$~\ref{Fig:poutvssnr}. This trend for the SOP and PNZSC performance agrees with that in \cite{Jeon_TIT_2011_all, Sun_spl_2012_all, Liu_wcoml_2013, Liu_coml_2013, Ferdinand_2013_all, Pan_TVT_2016_all} either correlated small-scale fading or correlated shadowing was considered.

The numerically evaluated performance results of the analytical expressions \eqref{Eq:Secrecyoutage}, \eqref{Eq:Nzoutage}, and \eqref{eq:PNZSC_approx} included in Figs$.$~\ref{Fig:poutvrate} and~\ref{Fig:poutvssnr} reveal that large and severely correlated shadowing in the legitimate receiver and eavesdropper might have a detrimental effect in the secrecy outage performance, even if small-scale fading between these nodes is independent. Future extensions of our framework include the consideration of MIMO techniques at some or all communications ends and the analysis of the impact of imperfect channel estimation.
\enlargethispage{\baselineskip}
\begin{figure}[!t]
\centering
\includegraphics[keepaspectratio,width=2.8in]{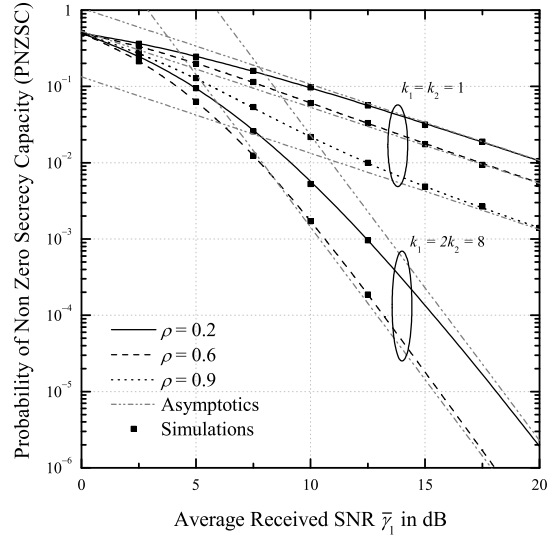}
\caption{PNZSC vs $\overline{\gamma}_1$ in dB for $\overline{\gamma}_2=0$dB and $m=4$, different values of the correlation coefficient $\rho$, and the shadowing parameters $k_1$ and $k_2$.}\label{Fig:poutvssnr}
\end{figure}


\vspace{-1.6mm}
\addcontentsline{toc}{chapter}{References}
\bibliographystyle{IEEEtran}
\bibliography{IEEEabrv,PLS_refs}
\enlargethispage{\baselineskip}

\end{document}